\DeclareMathAlphabet{\bit}{OML}{cmm}{b}{it}
\newtheorem{thm}{Theorem}
\def\<{\leqslant}           
\def\>{\geqslant}           
\def\d{\partial}
\def\wh{\widehat}
\def\mR{\mathbb{R}}    
\def\Tr{\mathrm{Tr}}       
\def\rT{\mathrm{T}}        
\def\bS{\mathbf{S}}
\def\bE{\mathbf{E}}    
\def\[[[{[\![\![}
\def\]]]{]\!]\!]}
\def\bra{{\langle}}
\def\ket{{\rangle}}
\def\Bra{\left\langle}
\def\Ket{\right\rangle}
\def\re{\mathrm{e}}        
\def\rd{\mathrm{d}}        
\def\x{\times}
\def\bT{\mathbf{T}}
\def\cF{\mathcal{F}}
\def\cC{\mathcal{C}}
\def\mS{\mathbb{S}}
\def\Ups{\Upsilon}
\def\ups{\upsilon}
\title{\LARGE \bf 
Mean Square Optimal Control by Interconnection for 
Linear Stochastic Hamiltonian Systems$^*$}
\author{Igor G. Vladimirov$^{\dagger}$, \qquad Ian R. Petersen$^{\dagger}$\thanks{$^*$This work is supported by the Australian Research Council under grant DP160101121.}
\thanks{$^\dagger$Research School of Electrical, Energy and Materials Engineering, College of Engineering and Computer Science, Australian National University,
Canberra, Acton, ACT 2601, Australia,
{\tt igor.g.vladimirov@gmail.com, i.r.petersen@gmail.com}.
}
}
\begin{document}
\maketitle
\thispagestyle{empty}
\pagestyle{plain}
\begin{abstract}
This paper is concerned with linear stochastic Hamiltonian (LSH) systems subject to random external forces. Their dynamics are modelled by linear stochastic differential equations, parameterised by stiffness, mass, damping and coupling matrices.   A class of physical couplings  is discussed for such systems using inerters, springs and dampers.  We consider a problem of minimising a steady-state quadratic cost functional over the coupling parameters for the interconnection of two LSH systems, one of which plays the role of an analog controller.
 For this mean square optimal control-by-interconnection setting, we outline first-order necessary conditions of optimality which  employ variational methods developed previously for constrained linear quadratic Gaussian control problems.

{\bf Keywords:}
Linear stochastic Hamiltonian system,
inerter-spring-damper coupling, 
mean square optimal control, necessary conditions of optimality.

{\bf MSC codes:}
37K05,      
60H10,   	
93C05,   	
60G15,   	
93E20,   	
49K45.   	
\end{abstract}

\section{\bf Introduction}

Physical aspects of mechanical and electrical systems, arising in engineering applications, are usually represented in terms of energy functions,  captured in the Hamiltonian or Lagrangian,  and constitutive relations for nonconservative elements. The energy conservation and dissipation play an important role  in the dynamics of Hamiltonian systems and their open versions, port-Hamiltonian systems \cite{OVMM_2001,OVME_2002,VJ_2014}, which interact between themselves and with the surroundings. The energy transfer  mechanisms are employed in the control-by-interconnection approach to achieving performance specifications for networks of such systems,  whose analysis uses dissipation inequalities with storage and supply rate functions \cite{SVP_2020,V_2016,W_1972}.

In comparison with controllers, which can produce a fairly arbitrary  actuator signal based on digital processing of measurements, a controller in the form of a physical system (such as used, for example, in centrifugal governors \cite{M_1868})  is constrained by the Hamiltonian structure. If a physical system interacts with a complex environment, which does not lend itself to a deterministic description, the Hamiltonian structure is combined with a stochastic model of the external random forcing. Examples include the vehicle suspension affected by an uneven road profile, elastic structures in turbulent fluid flows, and  electrical circuits subject to thermal noise. Such applications lead to stochastic optimal control problems with Hamiltonian structure constraints, similar to those arising in coherent quantum control \cite{JG_2010,NJP_2009} from physical realizability conditions \cite{JNP_2008,SP_2012}.

The present paper is concerned with a model class of linear stochastic Hamiltonian (LSH) systems \cite{VP_2018b} subject to random external forces. The  evolution of such a system is governed by a linear stochastic differential equation (SDE), driven by an Ito process and parameterised by a quadruple of stiffness, mass, damping and coupling matrices. These parameters specify the energetics of the system (including the quadratic Hamiltonian and Langevin viscous damping) and its interaction with the environment. We discuss a multivariable stochastic version  of physical couplings for such systems, which involve inerters \cite{S_2002,S_2003}, springs and dampers.  In the resulting interconnection of two LSH systems, one of them is interpreted as a plant, while the other plays the role of an analog (rather than digital) controller. The infinite-horizon performance of the interconnected LSH system, driven by a standard  Wiener process, is quantified by a steady-state mean square cost functional. This gives rise to a linear-quadratic-Gaussian (LQG) control problem of minimising the cost over the coupling parameters within the  Hamiltonian structure of the interconnection. For this mean square optimal control-by-interconnection problem, we outline first-order necessary conditions of optimality which  employ variational techniques (such as Frechet differentiation in matrix-valued variables) developed previously for constrained LQG control problems \cite{BH_1998,SIG_1998,VP_2013a}.

The paper is organised as follows.
Section~\ref{sec:LSH} specifies the class of LSH systems and reviews their properties such as 
the energy balance relations and stability.  
Section~\ref{sec:link} discusses a multivariable stochastic version of the inerter-spring-damper coupling of LSH systems.
Section~\ref{sec:msopt} considers a mean square optimal control problem for the LSH system interconnection and outlines first-order necessary conditions of optimality.
Section~\ref{sec:conc} provides concluding remarks.

\section{\bf Linear stochastic Hamiltonian systems}
\label{sec:LSH}

 Consider a linear stochastic Hamiltonian (LSH) system \cite{VP_2018b} with $n$ degrees of freedom. The system is endowed  with $\mR^n$-valued vectors  $q:= (q_k)_{1\< k\< n}$, $\dot{q} = (\dot{q}_k)_{1\< k\< n}$ and $p:= (p_k)_{1\< k\< n}$  of generalised coordinates (positions), velocities and momenta, where $\dot{(\ )}:= \d_t$ is the time derivative, and vectors are organised as columns unless specified otherwise. The velocities and momenta are related by
 \begin{equation}
\label{pqdot}
    p
    :=
    \d_{\dot{q}} T
    =
    M\dot{q},
\end{equation}
where $M$ is a real positive definite symmetric \emph{mass matrix} of order $n$.  Here,
 \begin{equation}
 \label{kin}
    T(p):=
    \frac{1}{2}
    \|\dot{q}\|_M^2
    =
    \frac{1}{2} \|p\|_{M^{-1}}^2
 \end{equation}
is the kinetic energy of the system, with $\|v\|_E:= \sqrt{v^{\rT}E v} = |\sqrt{E}v|$ a weighted Euclidean (semi-) norm of a vector $v$ specified by a positive (semi-) definite matrix $E$.  In the case of  rotational degrees of freedom, the generalised coordinates $q_1, \ldots, q_n$ are angular positions, the mass matrix $M$ corresponds to the tensor of inertia, and $p$ is the angular momentum vector. In the context of electrical networks (such as RLC circuits),   the above quantities are interpreted according to electromechanical analogies.  The potential energy of the LSH system is a quadratic  form of the position vector:
\begin{equation}
\label{VK}
    V(q)
    =
    \frac{1}{2}
    q^\rT K q,
\end{equation}
where $K=K^\rT \in \mR^{n\x n}$ is a \emph{stiffness matrix}.
The system Hamiltonian  $H: \mR^{2n} \to \mR$ on the phase space $\mR^{2n} = \mR^n \x \mR^n$ (the product of the position and momentum spaces) is the sum of
the potential energy (\ref{VK}) and the kinetic energy (\ref{kin}):
\begin{equation}
\label{HR}
    H(q,p)
    :=
    V(q)
    +
    T(p)
    =
  \frac{1}{2}
  x^\rT R x,
  \qquad
    x
    :=
    {\begin{bmatrix}
        q\\
        p
    \end{bmatrix}},
\end{equation}
where
\begin{equation}
\label{R}
  R
  :=
  {\begin{bmatrix}
    K & 0\\
    0 & M^{-1}
  \end{bmatrix}}
\end{equation}
is the \emph{energy matrix}. The system is also endowed with an $\mR^m$-valued output $y:=(y_k)_{1\< k\< m}$. The position $q$, the momentum $p$, and the output $y$    of the LSH system  evolve in time according to the equations
\begin{align}
\label{LSH1}
    \dot{q}
    &= \d_pH =  M^{-1}p,\\
\nonumber
    \rd p
    & =
    (-\d_q H - F\dot{q} )\rd t
    +
    N^\rT\rd W\\
\label{LSH2}
    & =
    (-Kq - FM^{-1}p)\rd t
        +
          N^\rT
        \rd W,\\
\label{LSH3}
    y & = N q,
\end{align}
where the ODE (\ref{LSH1}) is obtained from (\ref{pqdot}).  The SDE (\ref{LSH2}), which is equivalent to \begin{equation}
\label{LSH2q}
    M\rd \dot{q}= (-Kq-F\dot{q})\rd t + N^\rT \rd W
\end{equation}
(with $\rd \dot{q}:= \rd (\dot{q})$),
 is driven by an $\mR^m$-valued  random process $W:=(W_k)_{1\< k\< m}$ which is assumed to be an Ito process \cite{RW_2000} 
 with respect to an underlying  filtration $\cF:= (\cF_t)_{t\> 0}$.
The process $W$, whose structure is specified below,
models an external input random forcing on the system, with $N^\rT W$ having the physical dimension of momentum in accordance with (\ref{LSH2}).
The dispersion matrix $N^\rT$ of this SDE is specified by a system-environment \emph{coupling matrix} $N\in \mR^{m\x n}$ which relates the output $y$ to the position $q$ in (\ref{LSH3}). The term $-F\dot{q} = -FM^{-1}p$ in (\ref{LSH2}) is  the Langevin viscous damping force, specified by   a \emph{damping matrix} $F \in \mS_n^+$, where $\mS_n^+$ denotes the set of real positive semi-definite symmetric matrices  of order $n$.
In terms of the $\mR^{2n}$-valued state process $x$ in (\ref{HR}), the equations of motion (\ref{LSH1})--(\ref{LSH3}) are represented  as
\begin{align}
\label{dXlin}
    \rd x
    & =
    \Big(
        J
        -
        {\begin{bmatrix}
          0 & 0 \\
          0 & F
        \end{bmatrix}}
    \Big)
    H'
        \rd  t
        +
        {\begin{bmatrix}
          0 \\
          N^\rT
        \end{bmatrix}}\rd W
        =
    A x \rd t + B\rd W,\\
\label{yCx}
    y &
    =
    Cx,
\end{align}
with appropriately dimensioned state-space matrices $A$, $B$, $C$ given by 
\begin{align}
\label{A}
    A
    & :=
        \Big(
        J
        -
        {\begin{bmatrix}
          0 & 0 \\
          0 & F
        \end{bmatrix}}
    \Big)
    R
     =
        {\begin{bmatrix}
          0 & M^{-1} \\
          -K & -FM^{-1}
        \end{bmatrix}},\\
\label{B}
        B
        & :=
        {\begin{bmatrix}
          0 \\
          N^{\rT}
        \end{bmatrix}},\\
\label{C}
        C
        & :=
        {\begin{bmatrix}
          N & 0
        \end{bmatrix}},
\end{align}
cf. \cite[Eq. (20)]{V_2016}. Here, 
$    J:=
    {\small\begin{bmatrix}
        0 & I_n \\
        -I_n & 0 \\
    \end{bmatrix}}$ 
(with $I_n$ the identity matrix of order $n$) is the symplectic structure matrix
which generates the Poisson bracket    \cite{A_1989}
\begin{equation}
\label{Poiss}
    \{f, g\}
    :=
    f'^{\rT} J g'
    =
    \d_q f^{\rT}\d_p g - \d_p f^{\rT}\d_q g
\end{equation}
for smooth functions $f,g:\mR^{2n}\to \mR$ on the phase space.
Also,
$(\cdot)'$ denotes the gradient of a function with respect to all its variables, so that
\begin{equation}
\label{H'}
    H'
    =
    {\begin{bmatrix}
        \d_q H \\
        \d_p H
    \end{bmatrix}}
    =
    {\begin{bmatrix}
        \d_q V \\
        \d_p T
    \end{bmatrix}}
    =
    {\begin{bmatrix}
        K q \\
        M^{-1} p
    \end{bmatrix}}
    =
    Rx
\end{equation}
in (\ref{dXlin})
consists of the gradients of the Hamiltonian over the positions and momenta. Accordingly,
$-V' = -Kq$ is a potential force field.
The LSH system in (\ref{LSH1})--(\ref{LSH3})  (or (\ref{dXlin})--(\ref{C}))  is parameterised by the quadruple $(K,M,F,N)$ of the stiffness, mass, damping and coupling matrices. 
If, in addition to $M\succ 0$, the matrices $K$ and $F$ are also positive definite, then the matrix $A$ in (\ref{A}) is Hurwitz \cite[Theorem 1]{VP_2018b} and the LSH system is internally stable. Although the stiffness matrix $K$ is positive definite in standard mass-spring-dashpot models,  negative effective stiffness can also be achieved by using special mechanical arrangements \cite{LGT_2007}. The static memoryless dependence of the output $y$ on the internal position vector $q$ in (\ref{LSH3}) also admits mechanical implementation, for example, by using levers or more complicated linkages; see Fig.~\ref{fig:KMFN}.
\begin{figure}[htbp]
\centering
\includegraphics[width=7cm]{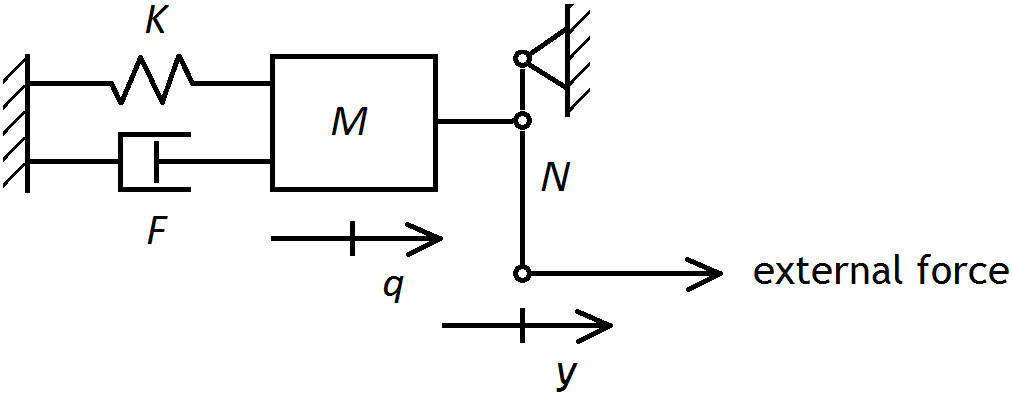}\vskip-1mm
  \caption{A mass-spring-dashpot system of mass $M$ with spring of stiffness $K$ and dashpot  of damping constant $F$. The internal position $q$ is the deviation of the mass from its equilibrium.  The external force is exerted on the free end (with the output displacement $y$) of a frictionless lever which implements the system-environment coupling $N$. In this example, $N>0$ is the mechanical advantage of the lever. All the elements of the system, except for the mass $M$, are assumed massless.
    }
    \label{fig:KMFN}
\end{figure}

The incremental work $\dot{y}^\rT \rd W$ of the external force on the system contributes to its energy balance \cite{VP_2018b} in terms of the stochastic differential of the Hamiltonian (\ref{HR}) as
\begin{align}
\nonumber
    \rd  H
    & =
    H'^{\rT}\rd x
    +
    \frac{1}{2}
    \bra
        \d_p^2 H,
        N^\rT \Sigma N
    \ket
    \rd t\\
\nonumber
    &=
    \Big(
        \{H,H\}
        -
        \d_p H^\rT F\dot{q}
        +
        \frac{1}{2}
        \bra
            M^{-1},
            N^\rT \Sigma N
        \ket
    \Big)\rd t
        +
        \d_p H^{\rT}
        N^\rT \rd W\\
\label{dH}
        &=
        \Big(
            -
            \|\dot{q}\|_F^2
    +
            \frac{1}{2}
            \bra
                N M^{-1}N^\rT,
                \Sigma
            \ket
        \Big)\rd t
        +
        \dot{y}^\rT
        \rd W,
\end{align}
where $\bra a,b\ket:= \Tr(a^\rT b)$ is the Frobenius inner product of real matrices, with $\|a\|:= \sqrt{\bra a,a \ket}$ the corresponding norm. The potential energy $V$ does not enter the right-hand side of (\ref{dH}) since $\{H,H\} =0$ due to the antisymmetry of the Poisson bracket (\ref{Poiss}).   The energy dissipation rate
$    -\dot{q}^{\rT}F\dot{q}
    =
    -\|\dot{q}\|_F^2
    \< 0
$
in the drift of (\ref{dH}) comes from the work of the damping force  over the system. The additional term $            \frac{1}{2}
            \bra
                N M^{-1}N^\rT,
                \Sigma
            \ket\>0
$ originates from the Ito lemma \cite{RW_2000} 
combined with (\ref{HR})--(\ref{yCx}), (\ref{H'}), the  Hessian
$
    \d_p^2H = \d_p^2T = M^{-1}
$ of the kinetic energy and the diffusion matrix $N^\rT \Sigma N$ of the process $N^\rT W$. The latter matrix is related to
an $\mS_m^+$-valued process $\Sigma$, adapted to the filtration $\cF$
and specifying the Ito table
\begin{equation}
\label{dWdW}
    \rd W \rd W^\rT = \Sigma \rd t,
    \qquad
    \Sigma := \beta\beta^\rT
\end{equation}
for the Ito process $W$, which drives the SDE (\ref{LSH2}) and has the stochastic differential
\begin{equation}
\label{dW}
  \rd W(t) = \alpha(t) \rd t + \beta(t) \rd \ups(t).
\end{equation}
Here, $\alpha$, $\beta$ are $\cF$-adapted processes with values in $\mR^m$, $\mR^{m\x m}$,  respectively, satisfying 
$    \int_0^t
    (|\alpha(\tau)| + \|\beta(\tau)\|^2)
    \rd \tau
    <
    +\infty
$ 
almost surely
for any time $t>0$, and $\ups$ is a standard Wiener process \cite{RW_2000} 
in $\mR^m$ with respect to the filtration $\cF$. Since $\|\beta\|^2 = \Tr \Sigma$ in view of (\ref{dWdW}), the term
$    
    \int_0^t
    \Tr \Sigma(\tau)
    \rd \tau
$ 
describes the quadratic variation 
of the process $W$ in (\ref{dW}) over the time interval $[0,t]$. In the absence of diffusion, when $\Sigma(t) =  0$ for all $t\> 0$, the momentum $p$  is an absolutely continuous  function of time, and the SDEs (\ref{LSH2}), (\ref{LSH2q}) reduce to the ODE
$
    \dot{p}
    =
    M\ddot{q}
    =-Kq - F\dot{q}
        +
          N^\rT \alpha
$
driven by the external force $N^\rT \alpha$. 
Returning to the general setting, note that the special structure (\ref{A})--(\ref{C})   of the LSH system (as an input-output operator $W\mapsto y$)  manifests itself in the form of the transfer function  $$
    \Phi(s):= \chi(s)B = N(K + sF + s^2 M)^{-1}N^\rT,
$$
which, similarly to the case of deterministic linear time-invariant systems, relates the Laplace transforms
$$
    {\begin{bmatrix}
      \wh{x}(s) &
      \wh{y}(s) &
      \wh{W}(s)
    \end{bmatrix}}
    :=
    \int_{0}^{+\infty}
    \re^{-st}
    {\begin{bmatrix}
      x(t)\rd t &
      y(t)\rd t &
      \rd W(t)
    \end{bmatrix}}
$$
of the processes in (\ref{dXlin}), (\ref{yCx})
(where $\wh{W}$ is associated with the incremented process $W$) as
$$
    \wh{y}(s) = C\wh{x}(s) = \Phi(s)\wh{W}(s) + \chi(s)x(0),
    \quad
    \chi(s):= C(s I_{2n}-A)^{-1}.
$$
Accordingly, the static gain matrix of the system is symmetric:
\begin{equation}
\label{T0}
    \Phi(0)
    =
    NK^{-1}N^\rT,
\end{equation}
provided the stiffness matrix $K$ is nonsingular. In the case of $K\succ 0$, (\ref{T0}) yields $\Phi(0)\succcurlyeq 0$, which reflects the property of  a usual spring under a static load to deform in the direction of the force applied.

\section{\bf Inerter-spring-damper coupling of LSH systems}
\label{sec:link}

The input force acting on the LSH system can come from its relative position, velocity and acceleration   with respect to another such system or an external reference position signal (for example, an uneven road profile \cite{BPR_2012} affecting the car suspension). A fairly general class of physical couplings, which, similarly to the deterministic case of \cite{P_2015},  convert the position variables and their time derivatives (or appropriate Ito increments) into forces, is provided by the following multivariable stochastic version of \emph{inerter-spring-damper}   links \cite{S_2002}. Such a coupling has two terminals whose $\mR^m$-valued positions $y_1$, $y_2$ (measured in one direction) are assumed to be absolutely continuous random functions of time, with $\dot{y}_1$, $\dot{y}_2$ being Ito processes which give rise to the forcing $\omega$ at the first terminal as
\begin{equation}
\label{f12}
    \rd \omega =  (\kappa(y_2-y_1)+\phi(\dot{y}_2-\dot{y}_1))\rd t + \mu \rd (\dot{y}_2-\dot{y}_1).
\end{equation}
Here, $\mu\succcurlyeq 0$, $\kappa$, $\phi\succcurlyeq 0$ are real symmetric \emph{inertance} \cite{S_2002}, stiffness and damping  matrices of order $m$. Under the condition that the $(\mu,\kappa,\phi)$-coupling  is massless,  the forcing at the second terminal of the link is  equal in magnitude and opposite in direction due to Newton's third law:
\begin{equation}
\label{f21}
    -\rd \omega =(\kappa(y_1-y_2)+\phi(\dot{y}_1-\dot{y}_2))\rd t + \mu \rd (\dot{y}_1-\dot{y}_2).
\end{equation}
Although the role of $\mu$ is similar to that of the mass matrix, the idealised model assumption on the link being massless is satisfied  to a high degree of accuracy by making the actual mass of the inerter negligible \cite{S_2003} compared to its inertance.

As an example of interconnection of two LSH systems $(K_1,M_1,F_1,N_1)$, $(K_2,M_2,F_2,N_2)$ (each with the same number $n$ of degrees of freedom, for simplicity), suppose their outputs $y_1$, $y_2$ are coupled through the $(\mu,\kappa,\phi)$-link and are subject to the external $\mR^m$-valued forcing $W_1$, $W_2$ as shown in Fig.~\ref{fig:link}.
\begin{figure}[htbp]
\centering
\includegraphics[width=10cm]{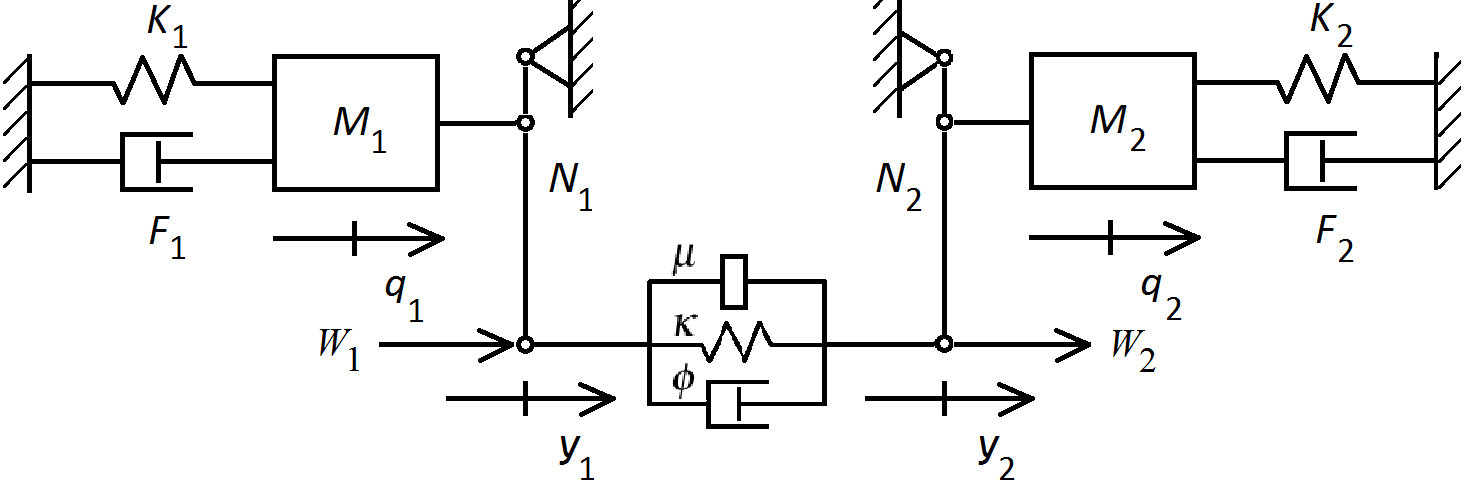}\vskip-1mm
  \caption{Two LSH systems $(K_1,M_1,F_1,N_1)$ and $(K_2,M_2,F_2,N_2)$ connected through an ideal inerter-spring-damper $(\mu,\kappa,\phi)$-coupling and subjected to external forcing $W_1$,  $W_2$. Here, $q_1$, $q_2$ are the deviations of the systems  from their equilibrium positions, and $y_1$, $y_2$ are the corresponding outputs.}
    \label{fig:link}
\end{figure}
The resulting system is governed by
\begin{align}
\label{qk}
    \dot{q}_k
    & =
    M_k^{-1} p_k,\\
\label{pk}
    \rd p_k
    & =
    (-K_kq_k - F_k\dot{q}_k)\rd t
        +
          N_k^\rT
        (\rd W_k-(-1)^k\rd \omega),\\
\label{yk}
    y_k &
    =
      N_kq_k,
      \qquad
      k = 1,2,
\end{align}
where the forcing (\ref{f12}) on the first terminal of the link takes the form
\begin{equation}
\label{dom}
    \rd \omega
    =
    (\kappa Z q+\phi Z\dot{q})\rd t
    + \mu Z \rd \dot{q},
    \qquad
        Z:=
    {\begin{bmatrix}
      -N_1 & N_2
    \end{bmatrix}}.
\end{equation}
Here,
\begin{equation}
\label{q}
    q:=
    {\begin{bmatrix}
      q_1\\
      q_2
    \end{bmatrix}}
\end{equation}
is formed from the $\mR^n$-valued positions $q_1$, $q_2$ of the LSH systems.
Note that $p_1$, $p_2$ in (\ref{qk}), (\ref{pk}) are the \emph{individual} momenta of the systems, which, unlike the positions,  are  not merely subvectors of the momentum  of the interconnection whose structure is discussed below.

\begin{thm}
\label{th:link}
The interconnection of the LSH systems $(K_1,M_1,F_1,N_1)$, $(K_2,M_2,F_2,N_2)$ through the $(\mu,\kappa,\phi)$-coupling in Fig.~\ref{fig:link}, described by (\ref{qk})--(\ref{dom}), is an LSH $(K,M,F,N)$-system whose augmented position $q$, input $W$ and output $y$ are given by (\ref{q}) and
\begin{equation}
\label{Wy}
    W
    :=
    {\begin{bmatrix}
        W_1\\
        W_2
    \end{bmatrix}},
    \qquad
    y
    :=
    {\begin{bmatrix}
        y_1\\
        y_2
    \end{bmatrix}}
    =
    Nq.
\end{equation}
The parameters of the interconnected system are computed as
\begin{align}
\label{K}
    K
    & :=
    K_0
    +
    Z^\rT \kappa Z,\\
\label{M}
    M
    & :=
    M_0
    +
    Z^\rT \mu Z, \\
\label{F}
    F
    & :=
    F_0
    +
    Z^\rT \phi Z,\\
\label{N}
    N
    & :=
    {\begin{bmatrix}
      N_1 & 0\\
      0 & N_2
    \end{bmatrix}}
\end{align}
using the block diagonal matrices
\begin{equation}
\label{KMF0}
    K_0:=
    {\begin{bmatrix}
      K_1 & 0\\
      0 & K_2
    \end{bmatrix}},
    \qquad
    M_0:=
    {\begin{bmatrix}
      M_1 & 0\\
      0 & M_2
    \end{bmatrix}},
    \qquad
    F_0:=
    {\begin{bmatrix}
      F_1 & 0\\
      0 & F_2
    \end{bmatrix}},
\end{equation}
associated with the LSH systems as if they were decoupled.\hfill$\square$
\end{thm}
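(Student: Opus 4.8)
The plan is to bypass the momentum variable and work directly with the second-order form (\ref{LSH2q}), since the augmented momentum of the interconnection is \emph{not} the concatenation $[p_1^\rT\ p_2^\rT]^\rT$ of the individual momenta (as anticipated in the remark following (\ref{q})). First I would use $p_k = M_k\dot{q}_k$, whence $\rd p_k = M_k\rd\dot{q}_k$, to rewrite each momentum equation (\ref{pk}) as $M_k\rd\dot{q}_k = (-K_kq_k - F_k\dot{q}_k)\rd t + N_k^\rT\rd W_k - (-1)^k N_k^\rT\rd\omega$. The sign factor $-(-1)^k$ equals $+1$ for $k=1$ and $-1$ for $k=2$, so stacking the two block equations over $k=1,2$ and using the block-diagonal matrices (\ref{KMF0}) together with $W$, $N$ from (\ref{Wy}), (\ref{N}) gives
\[
    M_0\rd\dot{q}
    =
    (-K_0 q - F_0\dot{q})\rd t
    + N^\rT\rd W
    + {\begin{bmatrix} N_1^\rT \\ -N_2^\rT\end{bmatrix}}\rd\omega.
\]
The key observation is that the coefficient of $\rd\omega$ is precisely $-Z^\rT$, with $Z = [\,-N_1\ \ N_2\,]$ from (\ref{dom}).

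Next I would substitute the coupling law (\ref{dom}) for $\rd\omega$ and collect terms. The drift contributions $-Z^\rT\kappa Z q\,\rd t$ and $-Z^\rT\phi Z\dot{q}\,\rd t$ merge with $-K_0 q\,\rd t$ and $-F_0\dot{q}\,\rd t$, while the term $-Z^\rT\mu Z\rd\dot{q}$ is the crucial one: transposing it to the left-hand side renormalises the inertia and yields
\[
    (M_0 + Z^\rT\mu Z)\rd\dot{q}
    =
    \big(-(K_0+Z^\rT\kappa Z)q - (F_0+Z^\rT\phi Z)\dot{q}\big)\rd t
    + N^\rT\rd W.
\]
Comparison with (\ref{LSH2q}) identifies $K$, $M$, $F$ exactly as in (\ref{K})--(\ref{F}), while the output relation $y = Nq$ with the block-diagonal $N$ of (\ref{N}) is immediate from $y_k = N_kq_k$ in (\ref{yk}).

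It then remains to verify that $(K,M,F,N)$ is admissible as LSH data. Symmetry of $K$, $M$, $F$ follows since $K_0$, $M_0$, $F_0$ are symmetric and the congruence $Z^\rT(\cdot)Z$ preserves symmetry of $\kappa$, $\mu$, $\phi$. Because $M_0 = \diag(M_1,M_2)\succ 0$ and $Z^\rT\mu Z\succcurlyeq 0$ for $\mu\succcurlyeq 0$, the augmented mass matrix satisfies $M\succ 0$; likewise $F_0\succcurlyeq 0$ and $Z^\rT\phi Z\succcurlyeq 0$ give $F\in\mS_{2n}^+$, so the damping remains dissipative. Since $M$ is nonsingular, the collected equation can be solved for $\rd\dot{q}$, confirming that the interconnection is a well-posed LSH $(K,M,F,N)$-system with $2n$ degrees of freedom and $\mR^{2m}$-valued input and output; its augmented momentum is $p = M\dot{q} = M_0\dot{q} + Z^\rT\mu Z\dot{q}$, which explains why it differs from $[p_1^\rT\ p_2^\rT]^\rT$ by the inertance-induced coupling term.

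I do not expect a genuine obstacle: the argument is an exact substitution. The only delicate points are the bookkeeping of the sign pattern $-(-1)^k$ that turns the pair of coupling coefficients into $-Z^\rT$, and the conceptual fact that the inertance $\mu$ enters through $Z^\rT\mu Z$ on the inertial side, so that $M$ --- and hence the interconnection momentum --- is genuinely augmented rather than block-diagonal.
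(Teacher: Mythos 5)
Your proof is correct, and it takes a genuinely different (and more economical) route than the paper's. The paper keeps the individual momenta as the working variables: it introduces auxiliary Ito processes $\pi_k$ with $\rd \pi_k = (-K_kq_k-F_k\dot{q}_k)\rd t + N_k^\rT \rd W_k$, observes that $\rd\omega$ appears implicitly on both sides once (\ref{dom}) is combined with (\ref{pk}), and resolves this by \emph{solving for} $\rd\omega$ — which requires proving nonsingularity of $I_m+\mu Z M_0^{-1}Z^\rT$ (via isospectrality to $\sqrt{\mu}ZM_0^{-1}Z^\rT\sqrt{\mu}\succcurlyeq 0$), then substituting back and invoking the push-through identity $a(I+ba)^{-1}=(I+ab)^{-1}a$ together with the matrix inversion lemma to arrive at (\ref{pp2}). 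You instead eliminate $\rd\omega$ at the outset by substituting (\ref{dom}) into the stacked second-order equation $M_0\rd\dot{q} = (-K_0q-F_0\dot{q})\rd t + N^\rT\rd W - Z^\rT\rd\omega$ and resolving the implicit dependence in the variable $\rd\dot{q}$ rather than $\rd\omega$: the inertance term $-Z^\rT\mu Z\rd\dot{q}$ moves to the left and the only invertibility needed is $M = M_0+Z^\rT\mu Z\succ 0$, which is immediate from $M_0\succ 0$, $\mu\succcurlyeq 0$. This buys you a shorter argument with no matrix identities, and your closing verification that $K$, $M$, $F$ are symmetric with $M\succ0$, $F\succcurlyeq 0$ (admissibility as LSH data) is a point the paper only addresses in the discussion after the theorem. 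What the paper's longer route buys is an explicit closed-form expression (\ref{dom2}) for the coupling force $\rd\omega$ itself, which may be of independent interest; the momentum relation (\ref{pMq}) elucidating the inerter's role is recovered equally well by both arguments, in your case directly from $p=M\dot{q}=(I_{2n}+Z^\rT\mu ZM_0^{-1})[p_1^\rT\ p_2^\rT]^\rT$.
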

\begin{proof}
We will use auxiliary Ito processes $\pi_1$, $\pi_2$ with the stochastic differentials
\begin{equation}
\label{pik}
  \rd \pi_k
   :=
  (-K_kq_k - F_k\dot{q}_k)\rd t
        +
          N_k^\rT
        \rd W_k,
        \qquad
        k = 1,2.
\end{equation}
In view of (\ref{q}), (\ref{Wy}), (\ref{N}), (\ref{KMF0}), the relations (\ref{pik}) can be assembled into
\begin{equation}
\label{pipi}
    \rd
      {\begin{bmatrix}
        \pi_1\\
        \pi_2
    \end{bmatrix}}
    =
  (-K_0q - F_0\dot{q})\rd t
        +
          N^\rT
        \rd W.
\end{equation}
This allows
the SDEs (\ref{pk}) to be combined as
\begin{equation}
\label{pp}
    \rd
    {\begin{bmatrix}
        p_1\\
        p_2
    \end{bmatrix}}
    =
    \rd
    {\begin{bmatrix}
        \pi_1\\
        \pi_2
    \end{bmatrix}}
    -Z^\rT
    \rd \omega,
\end{equation}
where (\ref{pipi}) is used together with the matrix $Z$ from (\ref{dom}). Since
\begin{equation}
\label{qdot}
    \dot{q} =
    {\begin{bmatrix}
        M_1^{-1}p_1\\
        M_2^{-1}p_2
    \end{bmatrix}}
    =
    M_0^{-1}
    {\begin{bmatrix}
        p_1\\
        p_2
    \end{bmatrix}}
\end{equation}
in view of (\ref{qk}), (\ref{q}), (\ref{KMF0}), substitution of (\ref{pp}) into (\ref{dom}) leads to
\begin{align}
\nonumber
    \rd \omega
    &=
    \gamma \rd t
    +
    \mu
    Z M_0^{-1}
    \rd
    {\begin{bmatrix}
        p_1\\
        p_2
    \end{bmatrix}}\\
\label{dom1}
    &=
    \gamma \rd t
    +
    \mu
    Z
    M_0^{-1}
    \rd
    {\begin{bmatrix}
        \pi_1\\
        \pi_2
    \end{bmatrix}}
    -
    \mu
    Z
    M_0^{-1}
    Z^\rT
    \rd \omega,
\end{align}
where
\begin{equation}
\label{gamma}
    \gamma:= \kappa Z q+\phi Z\dot{q}
\end{equation}
is the drift term in the SDE
(\ref{dom}) which comes from the spring-damper  part of the coupling between the LSH systems; see Fig.~\ref{fig:link}. Now, (\ref{dom1}) is solved for $\rd \omega$ as
\begin{equation}
\label{dom2}
    \rd \omega
    =
    (I_m +     \mu ZM_0^{-1} Z^\rT)^{-1}
    \Big(
    \gamma \rd t
    +
    \mu
    Z
    M_0^{-1}
    \rd
    {\begin{bmatrix}
        \pi_1\\
        \pi_2
    \end{bmatrix}}
\Big).
\end{equation}
Here, the nonsingularity $\det(I_m +     \mu ZM_0^{-1} Z^\rT)\ne 0$ is secured by
$\mu ZM_0^{-1} Z^\rT$ being isospectral to the matrix $\sqrt{\mu} ZM_0^{-1} Z^\rT \sqrt{\mu}  \succcurlyeq 0$ since $\mu\succcurlyeq 0$ and $M_0\succ 0$.  Substitution of (\ref{dom2}) into (\ref{pp}) leads to
\begin{align}
\nonumber
    \rd
    {\begin{bmatrix}
        p_1\\
        p_2
    \end{bmatrix}}
    =&
    \rd
    {\begin{bmatrix}
        \pi_1\\
        \pi_2
    \end{bmatrix}}
    -Z^\rT
    (I_m +     \mu ZM_0^{-1} Z^\rT)^{-1}
    \Big(
    \gamma \rd t
    +
    \mu
    Z
    M_0^{-1}
    \rd
    {\begin{bmatrix}
        \pi_1\\
        \pi_2
    \end{bmatrix}}
\Big)\\
\nonumber
    =&-Z^\rT
    (I_m +     \mu ZM_0^{-1} Z^\rT)^{-1}
    \gamma \rd t\\
\nonumber
& +
    (I_{2n} - Z^\rT
    (I_m +     \mu ZM_0^{-1} Z^\rT)^{-1}    \mu
    Z
    M_0^{-1}
)
    \rd
    {\begin{bmatrix}
        \pi_1\\
        \pi_2
    \end{bmatrix}}\\
\nonumber
= &
    (I_m + Z^\rT \mu ZM_0^{-1})^{-1}
    \Big(
        -
    Z^\rT
    \gamma \rd t
    +
    \rd
    {\begin{bmatrix}
        \pi_1\\
        \pi_2
    \end{bmatrix}}
    \Big)\\
\label{pp1}
= &
    M_0M^{-1}
    \Big(
        -
    Z^\rT
    \gamma \rd t
    +
    \rd
    {\begin{bmatrix}
        \pi_1\\
        \pi_2
    \end{bmatrix}}
    \Big),
\end{align}
where the matrix identity $a(I+ba)^{-1} = (I+ab)^{-1}a$ is applied together with the matrix inversion lemma \cite{HJ_2007} and the matrix $M$ from (\ref{M}). By combining (\ref{pp1})  with (\ref{pipi}), (\ref{qdot}), (\ref{gamma}), it follows that
\begin{align}
\nonumber
    M\rd \dot{q}
    & =
    MM_0^{-1}
    \rd
    {\begin{bmatrix}
        p_1\\
        p_2
    \end{bmatrix}}
    =
        -
    Z^\rT
    \gamma \rd t
    +
    \rd
    {\begin{bmatrix}
        \pi_1\\
        \pi_2
    \end{bmatrix}}\\
\nonumber
    & =
        -
    Z^\rT
    (\kappa Z q+\phi Z\dot{q}) \rd t
    -
      (K_0q + F_0\dot{q})\rd t
        +
          N^\rT
        \rd W\\
\label{pp2}
    & =
    (-K q-F\dot{q}) \rd t
        +
          N^\rT
        \rd W,
\end{align}
where use is also made of the matrices $K$, $F$ from (\ref{K}), (\ref{F}).  A comparison of (\ref{pp2}) with (\ref{LSH2q}) shows that the interconnection being considered is indeed an LSH $(K,M,F,N)$-system,  whose matrices are given by (\ref{K})--(\ref{N}), and the momentum $p$ is related to the individual momenta $p_1$, $p_2$ as
\begin{equation}
\label{pMq}
    p
    =
    M\dot{q}
     =
    MM_0^{-1}
    {\begin{bmatrix}
        p_1\\
        p_2
    \end{bmatrix}}
    =
    (I_m + Z^\rT \mu ZM_0^{-1})
    {\begin{bmatrix}
        p_1\\
        p_2
    \end{bmatrix}},
\end{equation}
which elucidates the role of the inerter in the dynamic coupling of the systems.
\end{proof}

The matrices $K$, $M$, $F$ in (\ref{K})--(\ref{F}) depend affinely on the stiffness, inertance and damping matrices $\kappa$, $\mu$, $\phi$ of the coupling and are organised as a rank $m$ modification of the corresponding matrices (\ref{KMF0}). 
The relations (\ref{K})--(\ref{N}) can also be obtained by applying an appropriately modified formalism of Lagrangian mechanics, which takes into account  all the conservative and nonconservative elements of the system through their work-energy contributions.
In particular, the inerter in (\ref{f12}), (\ref{f21}) contributes the term $\frac{1}{2}\|\dot{y}_1 - \dot{y}_2\|_{\mu}^2 = \frac{1}{2}\|N_1 \dot{q}_1 - N_2\dot{q}_2\|_{\mu}^2 = \frac{1}{2}\|Z\dot{q}\|_{\mu}^2$ in the kinetic energy
$T = \frac{1}{2} \|\dot{q}_1\|_{M_1}^2 + \frac{1}{2} \|\dot{q}_2\|_{M_2}^2+\frac{1}{2}\|\dot{y}_1 - \dot{y}_2\|_{\mu}^2$  of the interconnected system (thus leading to (\ref{M})) due to the identity
\begin{align}
\nonumber
    (\dot{y}_1-\dot{y}_2)^\rT
    \mu \rd (\dot{y}_1-\dot{y}_2)
     =&
    \frac{1}{2}
    \rd
    \big(
    \|\dot{y}_1 - \dot{y}_2\|_{\mu}^2
    \big)\\
\label{Ito}
    & -
    \frac{1}{2}
    \Bra
        \mu,\,
        Z
        M^{-1}
        N^\rT
        \Sigma
        N
        M^{-1}
        Z^\rT
    \Ket
    \rd t,
\end{align}
whose left-hand side is the incremental work on the inerter. 
Here, $\Sigma$ is the diffusion matrix of the $\mR^{2m}$-valued process $W$ from (\ref{Wy}) in the sense of (\ref{dWdW}), so that $M^{-1}N^\rT\Sigma NM^{-1}$ is the diffusion matrix of $\dot{q}$. The Ito correction term on the second line  of (\ref{Ito}) is similar to that in (\ref{dH}).

From (\ref{K})--(\ref{F}), it follows  that if $\kappa\succcurlyeq 0$  and both LSH systems have positive definite stiffness and damping matrices $K_1$, $K_2$, $F_1$, $F_2$, then $K\succ 0$ and $F\succ 0$, which  makes the interconnection an internally stable LSH system. The state-space matrices of this $(K,M,F,N)$-system depend on the parameters of the constituent LSH systems and the coupling between them. Some of these parameters (those of them which can be varied) can be  chosen so as to improve the performance of the system dictated by a particular control objective.
In this approach, the first of the LSH systems can be regarded  as a plant, while the second of them plays the role of an analog (rather than digital) controller which implements feedback through the physical coupling with the plant without measurements as such. 

\section{\bf Mean square optimal control and first-order necessary conditions of optimality}\label{sec:msopt}

If the $(K,M,F,N)$-system, specified by (\ref{dXlin})--(\ref{C}), (\ref{K})--(\ref{N}) and arising from the interconnection in Fig.~\ref{fig:link}, is driven by a stationary Gaussian Ito process $W$ in (\ref{Wy}), then it has a unique invariant measure which is also Gaussian. If the process $W$ is modelled as the output of a stable  linear time-invariant shaping filter
\begin{equation}
\label{abcd}
    \rd \xi = a \xi \rd t + b \rd \ups,
    \qquad
    \rd W = c \xi\rd t  + d \rd \ups,
\end{equation}
driven by the standard Wiener process $\ups$ in (\ref{dW}), so that $\alpha = c\xi$ and $\beta = d$, then the incorporation of the internal state $\xi$ of the filter leads to an augmented system with the state-space realization $\Big({\small\begin{bmatrix}
  A & Bc\\ 0 & a
\end{bmatrix}}, {\small\begin{bmatrix}
  Bd \\ b
\end{bmatrix}}, {\small\begin{bmatrix}
  C & 0
\end{bmatrix}}\Big)$, whose non-Hamiltonian part can come only from the dynamics of $W$ in (\ref{abcd}), and the input is $\ups$. In particular, if $W$ itself is a standard Wiener process, then the system augmentation is not needed. In this case,  the solution $x$ of the SDE (\ref{dXlin}) is a diffusion process with an invariant zero-mean Gaussian probability distribution 
in $\mR^{4n}$ whose covariance matrix 
$    P :=
    \int_0^{+\infty} \re^{tA} BB^{\rT} \re^{tA^\rT} \rd t
$
is the controllability Gramian of the pair $(A,B)$ satisfying 
the algebraic Lyapunov equation (ALE)
\begin{equation}
\label{PALE}
    A P + P A^{\rT} + BB^{\rT} = 0.
\end{equation}
In the LQG control framework, the infinite-horizon performance of the interconnected system can be quantified in terms of the steady-state mean square cost
\begin{equation}
\label{Ups}
    \Ups
    :=
    \frac{1}{2}
    \bE(|\cC x|^2),
\end{equation}
which has to be minimised. Here, $\bE(\cdot)$ denotes expectation, and  $\cC$ is an appropriately dimensioned real matrix, which specifies the relative importance of the system variables. For example, if 
\begin{equation}
\label{cC}
    \cC
    :=
    {\begin{bmatrix}
    {\begin{bmatrix}
      \sqrt{\Pi_1} & 0
    \end{bmatrix}}
    M^{-1}
    {\begin{bmatrix}
      K & FM^{-1}
    \end{bmatrix}}\\
    {\begin{bmatrix}
      0 & \sqrt{\Pi_2}
    \end{bmatrix}}
    C
    \end{bmatrix}},
\end{equation}
where $C$ is given by (\ref{C}) and $\Pi_1$, $\Pi_2$ are positive definite matrices of orders $n$, $m$, then the cost functional in (\ref{Ups}) penalises the invariant mean value of the sum
\begin{equation}
\label{quad}
    |\cC x|^2 = \|(M^{-1})_{1\bullet}(Kq + F\dot{q})\|_{\Pi_1}^2 + \|y_2\|_{\Pi_2}^2
\end{equation}
of weighted Euclidean norms of the drift in the plant velocity $\dot{q}_1$ (with the drift corresponding to the acceleration) and the controller output $y_2$ (playing the role of an actuator signal). Here, $(M^{-1})_{1\bullet} \in \mR^{n\x 2n}$ is the upper block row of the matrix $M^{-1}$, so that $\dot{q}_1 = (M^{-1})_{1\bullet}p$ in accordance with (\ref{q}), (\ref{pMq}).  The minimisation of the acceleration in the mean square sense is relevant, for example,  to ride comfort improvement in vehicle suspension design \cite{S_2003} and other vibration  isolation problems. The specific structure of Ito processes can be taken into account in the optimal control setting  by considering more complicated cost functionals involving  the diffusion matrix $(M^{-1})_{1\bullet}N^\rT N(M^{-1})_{1\bullet}^\rT$ of $\dot{q}_1$ in addition to (\ref{quad}), which, however, is beyond the scope of this paper and will be discussed elsewhere.

The choice of a finite-dimensional parameter $\theta$, over which the mean square cost (\ref{Ups}) is minimised, depends on a particular application. For example,  if the inerter-spring-damper coupling is adjustable, then this suggests
\begin{equation}
\label{theta3}
    \theta := (\mu,\kappa,\phi).
\end{equation}
In general, $\theta$ can take values in an open subset $\Theta$ of a product of real matrix spaces endowed with a  Hilbert space structure with the direct-sum inner product. The minimisation of the mean square cost (\ref{Ups}) over $\theta \in \Theta$ can be carried out by using variational techniques. If the map $\theta \mapsto (K,M,F,N)$ is Frechet differentiable, the matrices $K$, $M$, $F$ are positive definite for every $\theta \in\Theta$, and the map $(K,M,F,N)\mapsto \cC$ is also Frechet differentiable (it is so for the dependence of $A$, $B$ on $K$, $M$, $F$, $N$ in (\ref{A}), (\ref{B})), then the  Frechet differentiability is inherited by the composite function
\begin{equation}
\label{comp}
    \theta \mapsto (K,M,F,N)\mapsto (A,B,\cC) \mapsto \Ups.
\end{equation}
Here, the map $(A,B,\cC) \mapsto \Ups$ is differentiable due to $A$ being Hurwitz.
This allows the first-order necessary conditions of optimality in the minimisation problem
\begin{equation}
\label{opt}
    \Ups\to \inf,
    \qquad
    \theta \in \Theta,
\end{equation}
to be obtained in the form $\d_\theta \Ups = 0$. The Frechet derivative $\d_\theta \Ups$ of the cost $\Ups$ in (\ref{Ups}) is also of interest from the infinitesimal perturbation analysis viewpoint since it quantifies the sensitivity of the system performance to the adjustable parameters. Its computation can be carried out by applying the chain rule to
the composite function (\ref{comp}):
\begin{align}
\label{chain1}
    \d_\theta \Ups
    & =
    \d_\theta K^\dagger
    (\d_K \Ups)
    +
    \d_\theta M^\dagger
    (\d_M \Ups)
    +
    \d_\theta F^\dagger
    (\d_F \Ups)
    +
    \d_\theta N^\dagger
    (\d_N \Ups),\\
\label{chain2}
    \d_K \Ups
    & =
    \d_K A^\dagger
    (\d_A \Ups)
    +
    \underbrace{\d_K B^\dagger}_0
    (\d_B \Ups)
    +
    \d_K \cC^\dagger
    (\d_\cC \Ups),\\
    \d_M \Ups
    & =
    \d_M A^\dagger
    (\d_A \Ups)
    +
    \underbrace{\d_M B^\dagger}_0
    (\d_B \Ups)
    +
    \d_M \cC^\dagger
    (\d_\cC \Ups),\\
\label{chain4}
    \d_F \Ups
    & =
    \d_F A^\dagger
    (\d_A \Ups)
    +
    \underbrace{\d_F B^\dagger}_0
    (\d_B \Ups)
    +
    \d_F \cC^\dagger
    (\d_\cC \Ups),\\
\label{chain5}
    \d_N \Ups
    & =
    \underbrace{\d_N A^\dagger}_{0}
    (\d_A \Ups)
    +
    \d_N B^\dagger
    (\d_B \Ups)
    +
    \d_N \cC^\dagger
    (\d_\cC \Ups),
\end{align}
where $(\cdot)^\dagger$ is the operator adjoint. The relations   (\ref{chain1})--(\ref{chain5}) are represented briefly as
\begin{equation}
\label{chain}
    \d_\theta \Ups
    =
    \d_\theta (K,M,F,N)^\dagger
    (\d_{K,M,F,N}(A,B,\cC)^\dagger
    (\d_{A,B,\cC}\Ups))
\end{equation}
in terms of operator extensions of the gradient vectors and Jacobian matrices (consisting of the partial Frechet derivatives of matrix-valued functions in matrix-valued variables).  The sparsity of the Jacobian matrix $\d_{K,M,F,N}(A,B,\cC)$ in (\ref{chain2})--(\ref{chain5}) follows from (\ref{A}), (\ref{B}), since $A$ does not depend on $N$, while $B$ depends only on $N$. By the results of \cite{BH_1998,SIG_1998,VP_2010b} on the differentiation of LQG costs,
\begin{equation}
\label{dUpsdABC}
    \d_A \Ups = \Gamma,
    \qquad
    \d_B \Ups = QB,
    \qquad
    \d_\cC \Ups = \cC P,
\end{equation}
where
\begin{equation}
\label{QP}
    \Gamma:= QP
\end{equation}
is the \emph{Hankelian} for the matrix triple $(A,B,\cC)$ (the eigenvalues of $\Gamma$ are the squares of  the Hankel singular values), associated with (\ref{PALE}) and the observability Gramian  $Q:= \int_0^{+\infty} \re^{tA^\rT} \cC^{\rT}\cC \re^{tA} \rd t$ of the pair $(A,\cC)$ satisfying the ALE
\begin{equation}
\label{QALE}
    A^\rT Q + Q A + \cC^{\rT}\cC = 0.
\end{equation}
The Hankelian  $\Gamma$ is split into four $(2n\x 2n)$-blocks $(\cdot)_{jk}$ in accordance with the partitioning  of the interconnected system variables in (\ref{HR}) into positions and momenta, while the matrices $B$, $\cC$ are split into  two blocks $(\cdot)_j$, $(\cdot)_k$, with $j,k=1,2$. 
Also, similarly to \cite{VP_2013a}, we will use a ``sandwich'' operator  $\[[[u,v\]]]$, specified by real matrices $u$, $v$ and acting on appropriately dimensioned real matrices $z$ as $\[[[u,v\]]](z):= uzv$.  Its adjoint is another such operator: $\[[[u,v\]]]^\dagger = \[[[u^\rT,v^\rT\]]]$. However, if it is restricted to the subspace $\mS$ of real symmetric matrices, its adjoint is the composition $(\[[[u,v\]]]\big|_\mS)^\dagger = \bS\[[[u^\rT,v^\rT\]]]$ with the symmetrizer $\bS(z) = \frac{1}{2}(z+z^\rT)$ of square matrices. At the same time, $(\[[[u,u^\rT\]]]\big|_\mS)^\dagger = \[[[u^\rT,u\]]]\big|_\mS$. These operators allow the Frechet derivatives of the matrices $A$, $B$ in (\ref{A}), (\ref{B}) with respect to $K$, $M$, $F$, $N$ to be represented as
\begin{align}
\label{diff1}
    \d_K A
    & =
    -
    \[[[
    {\begin{bmatrix}
      0\\
      I
    \end{bmatrix}},
    {\begin{bmatrix}
      I & 0
    \end{bmatrix}}
    \]]],
    \quad
    \d_M A
    =
    \[[[
    {\begin{bmatrix}
      -M^{-1}\\
      FM^{-1}
    \end{bmatrix}},
    {\begin{bmatrix}
      0 & M^{-1}
    \end{bmatrix}}
    \]]],\\
\label{diff2}
    \d_F A
    & =
    -
    \[[[
    {\begin{bmatrix}
      0\\
      I
    \end{bmatrix}},
    {\begin{bmatrix}
      0 & M^{-1}
    \end{bmatrix}}
    \]]],
    \quad
    \d_N B
    =
    \[[[
    {\begin{bmatrix}
      0\\
      I
    \end{bmatrix}},
    I
    \]]]
    \bT,
\end{align}
where the dimensions are omitted for brevity, and $\bT$ is the matrix transpose operator. 
Also, in the case of (\ref{theta3}),  only the following entries of the corresponding Jacobian matrix $\d_{\mu,\kappa,\phi}(K,M,F,N)$ in (\ref{chain}) are nontrivial in view of (\ref{K})--(\ref{N}):
\begin{equation}
\label{ZZ}
    \d_\mu M = \d_\kappa K = \d_\phi F = \[[[ Z^\rT, Z\]]],
\end{equation}
where the matrix $Z$ is defined in (\ref{dom}). Therefore, a combination of (\ref{ZZ}) with (\ref{chain1})--(\ref{chain4}), (\ref{dUpsdABC})--(\ref{diff2}) yields
\begin{align}
\nonumber
    \d_\mu \Ups
     & =
    \d_\mu M^\dagger
    (\d_M \Ups)
    =
    Z\d_M \Ups Z^\rT\\
\label{der1}
    & =
    Z
    (
        M^{-1}
        \bS(F\Gamma_{22}-\Gamma_{12})
        M^{-1}
        +
        \d_M\cC^\dagger(\cC P)
    )
    Z^\rT,\\
\nonumber
    \d_\kappa \Ups
     & =
    \d_\kappa K^\dagger
    (\d_K \Ups)
    =
    Z\d_K \Ups Z^\rT\\
\label{der2}
    & =
    Z
    (
        -\bS(\Gamma_{21}) + \d_K\cC^\dagger(\cC P)
    )
    Z^\rT,\\
\nonumber
    \d_\phi \Ups
     & =
    \d_\phi F^\dagger
    (\d_F \Ups)
    =
    Z\d_F \Ups Z^\rT\\
\label{der3}
    & =
    Z
    (
        -\bS(\Gamma_{22}M^{-1}) + \d_F\cC^\dagger(\cC P)
    )
    Z^\rT.
\end{align}
Here, the Frechet  derivatives  of the matrix $\cC$ in (\ref{cC}) with respect to $K$, $M$, $F$ are computed in a similar fashion, except that $\d_M\cC$ involves the sum of two sandwich operators (we omit these calculations for brevity). In view of (\ref{der1})--(\ref{der3}), the first-order necessary conditions of optimality $\d_\mu \Ups =0$, $\d_\kappa \Ups =0$, $\d_\phi \Ups =0$ for the problem (\ref{opt}) are organised as a set of nonlinear matrix algebraic equations whose solution will be considered  elsewhere. A similar variational approach can be applied to optimizing the interconnection performance  not only over the parameters of the coupling but also  with respect to those of the LSH controller itself.

\section{\bf Conclusion}\label{sec:conc}

We have discussed a class of LSH systems, governed by linear SDEs and specified by stiffness, mass, damping and coupling matrices. A multivariable stochastic version of the inerter-spring-damper couplings has been considered for such systems. We have discussed a mean square optimal control problem  for this interconnection of LSH systems  with respect to the 
coupling parameters, and outlined first-order necessary conditions of optimality using variational techniques developed previously for constrained LQG control settings. These conditions employ Frechet derivatives of the quadratic cost functional,  which can be  used for numerical solution of the optimization problem via Newton or gradient descent methods and for infinitesimal perturbation analysis of such systems.
%

\end{document}